\newcommand{\keywords}[1]{\par\addvspace\baselineskip
\noindent\keywordname\enspace\ignorespaces#1}
\renewcommand{\restriction}{\mathord{\upharpoonright}}
\newcommand*{\@old@slash}{}\let\@old@slash\slash
\def\slash{\relax\ifmmode\delimiter"502F30E\mathopen{}\else\@old@slash\fi}
\spnewtheorem{defn}[cc]{Definition}{\bfseries}{\upshape}
\spnewtheorem{thm}[cc]{Theorem}{\bfseries}{\upshape}
\spnewtheorem{lem}[cc]{Lemma}{\bfseries}{\upshape}
\spnewtheorem{expl}[cc]{Example}{\bfseries}{\upshape}
\spnewtheorem{cor}[cc]{Corollary}{\bfseries}{\upshape}
\spnewtheorem{rem}[cc]{Remark}{\bfseries}{\upshape}
\begin{document}

\mainmatter

\title{A tableau for set-satisfiability for extended fuzzy logic  BL}
\author{Agnieszka Ku\l acka\thanks{Supported by EPSRC DTA studentship in Computing Department, Imperial College London. I would like to thank Professor Ian Hodkinson for numerous readings of this paper and his constructive comments. }}
\institute{Department of Computing, Imperial College London\\
\mail\\
\url{http://www.doc.ic.ac.uk/~ak5911/}}

\authorrunning{A tableau for set-satisfiability for extended fuzzy logic  BL}
\maketitle

\begin{abstract}
This paper presents a tableau calculus for finding a model for a set-satisfiable finite set of formulas of a fuzzy logic BL$_{\triangle\sim}$, a fuzzy logic BL with additional  Baaz connective $\triangle$ and the involutive negation $\sim$, if such a model exists. The calculus is a generalisation of a tableau calculus for BL, which is based on the decomposition theorem for a continuous t-norm. The aforementioned tableau calculus for BL is used to prove that a formula $\psi$ of BL is valid with respect to all continuous t-norms or to find a continuous t-norm $\star$ and assignment $V$ of propositional atoms to [0,1] such that $\star$-evaluation $V_\star(\psi)<1$. The tableau calculus presented in this paper enables for a finite set of formulas $\Psi$  of BL$_{\triangle\sim}$ and $\mathcal{K}\subseteq[0,1]$ to find a continuous t-norm $\star$ and assignment $V$ of propositional atoms to [0,1] such that $\star$-evaluation $V_\star(\psi)\in \mathcal{K}$ for all $\psi\in \Psi$, or alternatively to show that such a model does not exist.

\keywords{tableaux, continuous t-norm, fuzzy logic, set-satisfiability, weak and strong point-satisfiability}
\end{abstract}

\section{Introduction}
Classical satisfiability of a propositional formula in a structure (model) is understood as the truth value of the formula relative to an assignment of truth values to propositional atoms. We say that a formula is satisfiable if such an assignment exists, in which it is true. In mathematical fuzzy logic a formula has a truth value, which is a value from [0,1], where 0 is the absolute falsity, 1 is the absolute truth, and other values indicate a partial truth. We can also consider a wider notion of satisfiability than classical. Following \cite{Butnariu1995} and \cite{Navara2000}, we can work with $\mathcal{K}$-satisfiability, where $\mathcal{K}\subseteq [0,1]$. A formula $\psi$ is $\mathcal{K}$-satisfiable if its truth value belongs to $\mathcal{K}$ under some assignment of values from [0,1] to atoms. To be able to define set-satisfiability rigorously, we need an exposition of fuzzy logic BL$_{\triangle\sim}$.

Basic fuzzy logic BL is a propositional fuzzy logic, in which formulas of BL are written with propositional atoms, $\bar{0}$ (falsum) and $\bar{1}$ (verum), joined by $\&$ (strong conjuncton), $\rightarrow,\vee, \wedge$ (weak conjunction),  $\leftrightarrow$. (see \cite{Hajek1998}) Logic BL$_{\triangle\sim}$ is BL with additional unary connectives Baaz connective $\triangle$ and the involutive negation $\sim$. The semantics of the logic is defined as follows. A t-norm $\star$ (also called a residuated t-norm if it has a residuum) is a function defined on  $[0,1]^2$ with values in $[0,1]$ such as it is associative, commutative, non-decreasing with neutral element 1, and its residuum (if it exists) is a function $\Rightarrow:[0,1]^2\rightarrow[0,1]$ satisfying $x\star z\leq y$ iff $z\leq x\Rightarrow y$.  We define the function $\Delta:[0,1]\rightarrow[0,1]$ by $\Delta 1=1$ and $\Delta x=0$ for all $x\in [0,1)$. Given  BL$_{\triangle\sim}$ formulas $\psi, \varphi$, the assignment $V$ of propositional atoms to elements of [0,1], a residuated t-norm $\star$, we inductively define $\star$-evaluation $V_\star$ as $V_\star(p)=V(p)$ for atoms $p$, $V_\star(\bar{0})=0$,  $V_\star(\bar{1})=1$, $V_\star(\psi\&\varphi)=V_\star(\psi)\star V_\star(\varphi)$, $V_\star(\psi\rightarrow\varphi)=V_\star(\psi)\Rightarrow V_\star(\varphi)$, $V_\star(\psi\vee\varphi)=\max\{V_\star(\psi), V_\star(\varphi)\}$, $V_\star(\psi\wedge\varphi)=\min\{V_\star(\psi), V_\star(\varphi)\}$, $V_\star(\triangle\psi)=\Delta V_\star(\psi)$, $V_\star(\sim\psi)=1-V_\star(\psi)$. Our work on satisfiability will be based on a continuous residuated t-norm.

We can now formally define $\mathcal{K}$-satisfiability. A formula $\psi$ is $\mathcal{K}$-satisfiable if there exists a continuous residuated t-norm $\star$ and an assignment $V$ of atoms to [0,1], for which  $V_\star(\psi)\in \mathcal{K}$. In this paper we will build a model for a finite set of formulas of BL$_{\triangle\sim}$ that is $\mathcal{K}$-satisfiable in this model. To achieve this, we will use tableau methods as a semantic proof system. The idea is based on decomposition theorem (see \cite{Cintula2011},  \cite{Hajek1998}, \cite{Metcalfe2009}, \cite{MostertShields1957}), by which a continuous t-norm (and thus its residuum) is expressed as a family of Product and \L ukasiewicz components. The formulas, for which we try to show that are $\mathcal{K}$-satisfiable, are translated into tableau formulas and the complement of the set $\mathcal{K}$ with respect to [0,1] is expressed as a union of subintervals. The latter enables to create finite branches of the tableau, i.e. finite sequences of tableau formulas, in which the next element of the sequence has fewer symbols of interpreted connectives. The branches are extended using some rules that we defined. 

The main result is that the tableau for a finite set of formulas for a subset $\mathcal{K}$ of truth values [0,1] is open iff the  set of formulas is $\mathcal{K}$-satisfiable. The tableau calculus presented in this paper enables for $\mathcal{K}\subseteq[0,1]$ and a finite set of formulas $\Psi$  of BL$_{\triangle\sim}$ to find a continuous t-norm $\star$ and an assignment $V$ of propositional atoms to [0,1] such that $V_\star(\psi)\in \mathcal{K}$ for all $\psi\in \Psi$, or alternatively to show that such a model does not exist. 

 The axiomatization of BL$_{\triangle\sim}$ has been shown to be finitely strong standard complete  with respect to every continuous residuated t-norm with the additional unary connectives $\triangle,\sim$ (see \cite{Flaminio}). If $\mathcal{K}=[0,1)$ and our tableau closes, that is there is no such model for which the finite set of formulas are not  tautologies, the set of formulas are provable in BL$_{\triangle\sim}$. 

It is worth noticing that there exist tableau calculi or other proof systems for BL or \L ukasiewicz logics that demonstrate that a formula is a tautology or that it is not (see \cite{Bova}, \cite{Kulacka2014},  \cite{Kulacka2013}, \cite{Montagna}, \cite{Olivetti2003}, \cite{OrlowskaGolinska2011}, \cite{Vetterlein}, \cite{Vidal}). The advantage of our tableau over the ones existing in the literature is three-fold, (1) in case that the formula is not a tautology, it constructs a countermodel, (2) we can show that a formula is $\mathcal{K}$-satisfiable for $\mathcal{K}\subseteq[0,1]$, (3) our calculus tackles a set of formulas from an extended logic BL with additional unary connectives.

The paper is organised in the following way: section 2 recalls the decomposition theorem for continuous t-norm (see \cite{Cintula2011},  \cite{Hajek1998}, \cite{Metcalfe2009}, \cite{MostertShields1957}), section 3 defines the $\mathcal{K}$-tableau for a set of formulas of BL$_{\triangle\sim}$, which construction is exemplified in section 4. Section 5 shows the main result, i.e. the equivalence of $\mathcal{K}$-satisfiability  of a finite set of formulas of BL$_{\triangle\sim}$ and the existence of an open branch in $\mathcal{K}$-tableau. Conclusions are presented in the final section.

\section{Decomposition theorem}
In this section we will recall the decomposition theorem since we will use it in the proofs for a generalised tableau calculus for a set of formulas of BL$_{\triangle \sim}$ that are set-satisfiable. In this theorem we employ a special case of an ordinal sum of a family of continuous components.
\begin{defn}
Let $0\leq a<b\leq 1$. A \textit{\L ukasiewicz component} is a function $\star_{a,b}:[a,b]^2\rightarrow [a,b]$ such that for every $x,y\in [a,b]$, $x\star_{a,b} y=\max\{a,x+y-b\}$.
\end{defn}

\begin{defn}
 Let $0\leq a<b\leq 1$. A \textit{Product component} is a function $\star_{a,b}:[a,b]^2\rightarrow [a,b]$ such that for every $x,y\in [a,b]$, $x\star_{a,b} y=a+\frac{(x-a)(y-a)}{b-a}$.
\end{defn}
\begin{defn}\label{def1}
Let $C$ be a countable index set and let $([a_n^{K_n},b_n^{K_n}])_{n\in C}$ be a family of closed intervals with $0\leq a_n^{K_n}<b_n^{K_n}\leq 1, K_n\in\{L, P\}$ such that their interiors are pairwise disjoint. 
An \textit{ordinal sum of the family of continuous components}, $(\star_{a_n^{K_n}, b_n^{K_n}})_{n\in C}$, is the function $\star: [0,1]^2\rightarrow[0,1]$ such that for every $x,y\in [0,1]$
\begin{equation}\label{eq3}
x\star y=\begin{cases}
x\star_{a_n^{K_n}, b_n^{K_n}} y&\text{if } x,y\in [a_n^{K_n},b_n^{K_n}] \\
\min\{x,y\}&\text{otherwise,}\\
\end{cases}
\end{equation}
where $\star_{a_n^{K_n}, b_n^{K_n}}$ is either \L ukasiewicz component if $K_n=L$ or Product component if $K_n=P$.
\end{defn}
\begin{thm}\label{thm1}(Decomposition theorem \cite{Cintula2011},  \cite{Hajek1998}, \cite{Metcalfe2009}, \cite{MostertShields1957})\\
The function $\star:[0,1]^2\rightarrow[0,1]$ is a continuous t-norm iff
it is an ordinal sum of a family of  continuous components $(\star_{a_n^{K_n}, b_n^{K_n}})_{n\in C}$, where $C$ is a countable set and $([a_n^{K_n},b_n^{K_n}])_{n\in C}$ is a family of closed intervals, and the open intervals $(a_n^{K_n},b_n^{K_n})$, $n\in C, K_n\in\{L,P\}$,  are pairwise disjoint.
\end{thm}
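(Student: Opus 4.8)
The plan is to prove the two directions of the \textit{iff} separately, treating the right-to-left direction (every ordinal sum of continuous components is a continuous t-norm) as the routine one and the left-to-right direction (every continuous t-norm arises this way) as the substantive one.

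\medskip

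\emph{($\Leftarrow$) Ordinal sums give continuous t-norms.} First I would fix an ordinal sum $\star$ of a family $(\star_{a_n^{K_n},b_n^{K_n}})_{n\in C}$ as in Definition~\ref{def1} and verify each axiom directly. Commutativity is immediate from the symmetry of the defining cases and from the commutativity of each \L ukasiewicz and Product component. For the neutral element: if $y=1$ then either $y$ lies in some component interval $[a_n^{K_n},b_n^{K_n}]$ with $b_n^{K_n}=1$, in which case a one-line computation with the component formula gives $x\star 1=x$ when $x$ is also in that interval, and otherwise the ``$\min$'' clause gives $x\star 1=\min\{x,1\}=x$. Monotonicity is checked case by case: each component is non-decreasing, and $\min$ is non-decreasing, and one has to note that moving $x$ across the boundary of an interval cannot create a jump downward because the component value at a boundary agrees with the endpoint. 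Associativity is the only mildly delicate axiom: given $x,y,z$, I would argue that if all three lie in a common component interval the identity reduces to associativity of that single component; if they do not, then (using the pairwise-disjoint interiors) at least one of the three ``dominates'' in the sense that any product involving it collapses to a $\min$, and a short case analysis on the relative position of $x,y,z$ with respect to the interval containing the other two yields $x\star(y\star z)=\min\{x,y,z\}=(x\star y)\star z$. Finally, continuity: each component is continuous on its closed square, the function $\min$ is continuous, and the two clauses agree on the overlap (the boundary lines $x=a_n^{K_n}$, $x=b_n^{K_n}$, etc.), so $\star$ is continuous by a gluing argument. A residuum then exists automatically because a continuous (hence left-continuous) t-norm is residuated, with $x\Rightarrow y=\sup\{z:x\star z\le y\}$.

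\medskip

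\emph{($\Rightarrow$) Every continuous t-norm is such an ordinal sum.} This is the Mostert--Shields structure theorem and I would not reprove it from scratch; the honest plan is to cite \cite{MostertShields1957} (and the textbook treatments \cite{Cintula2011}, \cite{Hajek1998}, \cite{Metcalfe2009}) and merely indicate the skeleton. Call $x\in(0,1)$ \emph{idempotent} if $x\star x=x$; let $E$ be the set of idempotents together with $0$ and $1$. One shows $E$ is closed, so its complement in $[0,1]$ is a countable union of disjoint open intervals $(a_n,b_n)$; these, with their closures, are the component intervals, indexed by a countable set $C$. On each $[a_n,b_n]$ the restriction of $\star$ is, after the affine rescaling sending $[a_n,b_n]$ to $[0,1]$, a continuous t-norm on $[0,1]$ with no interior idempotents, and the classical fact is that such a t-norm is isomorphic either to the \L ukasiewicz t-norm or to the Product t-norm — this dichotomy is exactly the content that forces $K_n\in\{L,P\}$. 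Outside the union of these intervals, a monotonicity-plus-idempotency argument shows $x\star y=\min\{x,y\}$ whenever $x$ or $y$ is idempotent or the two lie in different components, which is precisely the second clause of \eqref{eq3}. Hence $\star$ is the ordinal sum of the $(\star_{a_n^{K_n},b_n^{K_n}})_{n\in C}$, and the open intervals are pairwise disjoint by construction.

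\medskip

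\textbf{Main obstacle.} The genuinely hard part is the last step of the forward direction — that a continuous t-norm on $[0,1]$ with no idempotents in the open interval is isomorphic to either \L ukasiewicz or Product — since this is where the whole decomposition gets its teeth, and a self-contained proof would require the topological-semigroup machinery of Mostert and Shields. Since Theorem~\ref{thm1} is invoked here only as a recalled tool for the tableau construction in the later sections, I would be content to state the forward direction with a citation and give full details only for the elementary backward direction, which is the one actually needed to \emph{justify} building t-norms as ordinal sums in the tableau calculus.
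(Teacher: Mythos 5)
The paper does not prove Theorem~\ref{thm1} at all: it is recalled as a classical result (Mostert--Shields) with citations, so there is no in-paper argument for your proposal to diverge from. Your two-direction plan --- verify the t-norm axioms and continuity for ordinal sums directly, and cite the idempotent-set decomposition plus the ``no interior idempotents implies isomorphic to \L ukasiewicz or Product'' dichotomy for the converse --- is exactly the standard proof in the cited sources, and deferring the hard direction to \cite{MostertShields1957} is the same choice the author makes. Two places in your sketch of the easy direction are looser than they should be if you actually wrote it out. First, continuity: when $C$ is infinite the component endpoints may accumulate, so a gluing argument over the closed squares does not suffice by itself; at an accumulation point one needs the squeeze $\min\{x,y\}\geq x\star y\geq$ (left endpoint of the relevant component) together with the shrinking of the components near that point. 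Second, your associativity claim that the mixed case always ``yields $x\star(y\star z)=\min\{x,y,z\}$'' is not literally true: if, say, $x,y$ lie in a common component $[a,b]$ and $z\geq b$, then both sides equal the component product $x\star_{a,b}y$, not $\min\{x,y,z\}$. The case analysis still closes, but the uniform $\min$ formula is the wrong invariant to aim for; the right statement is that whenever not all arguments that matter lie in one component, the odd one out is absorbed by the $\min$ clause of (\ref{eq3}) on both sides.
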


\noindent From theorem \ref{thm1}, it follows that the residuum of a continuous t-norm  of $\star$ is the function $\Rightarrow: [0,1]^2\rightarrow[0,1]$ given by equation (\ref{eq4}). For every $x,y\in [0,1]$,
\begin{equation}\label{eq4}
x\Rightarrow y=\begin{cases}
1&\text{if } x\leq y,\\
b_i^L-x+y&\text{if } x>y \text{ and } x,y\in[a_i^L,b_i^L], \\
a_i^P+\frac{(y-a_i^P)(b_i^P-a_i^P)}{x-a_i^P}&\text{if } x>y \text{ and } x,y\in[a_i^P,b_i^P],\\
y,&\text{otherwise.}\\
\end{cases}
\end{equation}

\noindent We will use these functions given in (\ref{eq3}) and (\ref{eq4}) in building the tableau calculus in the next section.

\section{$\mathcal{K}$-tableau for a set of formulas}
In \cite{Kulacka2014} we defined a tableau calculus and proved its soundness and completeness with respect to continuous t-norms. This enabled us to demonstrate that a given formula of BL is either valid or we could find a model, in which its truth value is less than 1. The generalisation presented below works in two ways: (1) we will have a tableau calculus for a finite set of formulas of BL$_{\triangle\sim}$, (2) we will have a calculus to show that they are $\mathcal{K}$-satisfiable for $\mathcal{K}\subseteq [0,1]$.

 To be able to accommodate set-satisfiability within a tableau calculus we will express a subset of [0,1] as a union of subintervals of [0,1].

\begin{lem}\label{tab_20}
Any subset $\mathcal{K}$ of [0,1] can be expressed as a union of
pairwise disjoint maximal subintervals of $\mathcal{K}$.
\end{lem}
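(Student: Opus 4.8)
The plan is to build the maximal subintervals directly as the connected components of $\mathcal{K}$ under the usual order topology on $[0,1]$. First I would define, for each point $x\in\mathcal{K}$, the set $I_x=\bigcup\{J : J\text{ is an interval},\ x\in J\subseteq\mathcal{K}\}$, i.e. the union of all subintervals of $\mathcal{K}$ that contain $x$. Since any union of intervals sharing the common point $x$ is again an interval, $I_x$ is itself an interval, it contains $x$, and it is contained in $\mathcal{K}$; by construction it is the largest interval with these properties, so each $I_x$ is a maximal subinterval of $\mathcal{K}$. (Here ``interval'' is taken in the order-theoretic sense: $J$ is an interval if $u,w\in J$ and $u\le v\le w$ imply $v\in J$; singletons and $\emptyset$ count as degenerate cases, and a singleton $\{x\}$ is legitimate when $x$ is an isolated point of $\mathcal{K}$.)

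Next I would show these maximal intervals are pairwise disjoint and cover $\mathcal{K}$. Covering is immediate since $x\in I_x$ for every $x\in\mathcal{K}$, so $\mathcal{K}=\bigcup_{x\in\mathcal{K}} I_x$. For disjointness, suppose $I_x\cap I_y\neq\emptyset$ for some $x,y\in\mathcal{K}$; then $I_x\cup I_y$ is a union of two intervals with a common point, hence an interval, and it is contained in $\mathcal{K}$ and contains $x$, so by maximality of $I_x$ we get $I_x\cup I_y\subseteq I_x$, i.e. $I_y\subseteq I_x$, and symmetrically $I_x\subseteq I_y$, whence $I_x=I_y$. Thus distinct members of the family $\{I_x : x\in\mathcal{K}\}$ are disjoint, and removing duplicates gives the desired decomposition of $\mathcal{K}$ into pairwise disjoint maximal subintervals.

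I do not expect a serious obstacle here; the statement is essentially the standard fact that an open (or arbitrary) subset of $\mathbb{R}$ is a disjoint union of its connected components, specialised to $[0,1]$, and the only things to be careful about are (i) fixing the precise notion of ``interval'' and ``maximal'' being used so the claim is literally true (in particular allowing degenerate one-point intervals, which are forced whenever $\mathcal{K}$ has an isolated point such as $\mathcal{K}=\{0,1\}$), and (ii) noting that the index set of the resulting family is countable when $\mathcal{K}$ is, say, a finite union of intervals or more generally whenever the nondegenerate components are at most countably many — which is all that is needed for the tableau construction later, though it is not asserted in the lemma as stated. If a cleaner phrasing is wanted, one can instead define an equivalence relation on $\mathcal{K}$ by $x\sim y$ iff the closed interval between $x$ and $y$ lies in $\mathcal{K}$, check transitivity using the interval property, and take the equivalence classes as the maximal subintervals; the two approaches yield the same decomposition.
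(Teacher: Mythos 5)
Your proof is correct, and it takes a genuinely different route from the paper's. The paper builds the decomposition greedily, one interval at a time: it picks a point $k_0\in\mathcal{K}$ not yet covered, computes the endpoints of its maximal surrounding interval as $k^-=\inf\{k: k\le k_0 \wedge \forall t\,(k<t\le k_0\rightarrow t\in\mathcal{K})\}$ and the symmetric $k^+$, sorts out which of the four open/closed combinations applies according to whether $k^-,k^+\in\mathcal{K}$, adds that interval to the partial union, and repeats until $\mathcal{K}$ is exhausted. Your construction produces the same family of intervals (your $I_x$ is exactly the interval with the paper's endpoints $k^-,k^+$ for $k_0=x$), but defines them all simultaneously as order-components, which buys you two things: the maximality and pairwise-disjointness arguments become one-line consequences of the fact that a union of intervals with a common point is an interval, and you avoid the termination issue lurking in the paper's phrasing --- ``when $\mathcal{K}-\hat{\mathcal{K}}=\emptyset$ the process terminates'' is not justified by ordinary induction, since $\mathcal{K}$ can have uncountably many components (e.g.\ the Cantor set, all of whose components are singletons), so the paper's process really requires a transfinite recursion that is left implicit. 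The paper's version is slightly more explicit about the four endpoint cases, which matches the case analysis later used for the disjunct formulas $\eta_{J_i}$, but your equivalence-relation reformulation is the cleaner and more robust argument. Your side remark on countability is also well taken: the lemma as stated does not deliver a countable index set, and indeed cannot in general, which is worth keeping in mind when the complement $\mathcal{K}'$ is later written as $\{J_i:i\in I\}$.
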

\begin{proof}
We will inductively build the union. \textit{General case.} Suppose that we have built a set of intervals whose union we call $\hat{\mathcal{K}}\subseteq \mathcal{K}$. Take any $k_0\in \mathcal{K}-\hat{\mathcal{K}}$. Let $k^{-}=\inf\{k: k\leq k_0 \wedge \forall t: k<t\leq k_0\rightarrow t\in \mathcal{K}\}$ and $k^{+}=\sup\{k: k_0\leq k \wedge \forall t: k_0\leq t<k\rightarrow t\in \mathcal{K}\}$. If $k^{-}, k^{+}\in \mathcal{K}$, then we have a closed interval $[k^{-}, k^{+}]$, if $k^{-}\in K, k^{+}\not\in \mathcal{K}$, then a right-open interval $[k^{-}, k^{+})$, if $k^{-}\not\in \mathcal{K}, k^{+}\in \mathcal{K}$, then a left-open interval $(k^{-}, k^{+}]$, and if $k^{-}, k^{+}\not\in \mathcal{K}$, then we have an open interval $(k^{-}, k^{+})$. We add the interval with the endpoints $k^-,k^+$ to the union $\hat{\mathcal{K}}$. This concludes the general case. When $\mathcal{K}-\hat{\mathcal{K}}=\emptyset$, the process terminates and we have selected a union of pairwise disjoint  maximal subintervals, which is obviously equal to $\mathcal{K}$.
\qed
\end{proof}

We will now fix $\mathcal{K}$, and express its complement $\mathcal{K}'=[0,1]-\mathcal{K}$ as a union of pairwise disjoint maximal subintervals $\{J_i:i\in I\}$ of $\mathcal{K}'$. The left endpoint of each $J_i$ is denoted by $j_i^-$ and the right endpoint of $J_i$ is denoted by $j_i^+$.

We will recall the definition of a tableau formula, a translation function, some notions of graph theory that are necessary for defining $\mathcal{K}$-tableau of a set of formulas. We modified these definitions to suit our purpose.

\begin{defn}(Tableau formula)\label{tab_1}\\
Let $I$ be an index set, and $Const=\{c_i^-, c_i^+: i\in I\}$, where $c_i^-, c_i^+$ are pairwise distinct constants. Let $L_0=Par{ }\cup{ }\{ + , -, \cdot, \div, \min, \max, \leq, < \}\cup Const$ and $L_1=L_0\cup\{\star, \Rightarrow,\Delta\}$ be signatures, where $Par$ is a set of constants (parameters), $+ , -, \cdot, \div, \min, \max, \star$, $\Rightarrow$ are binary function symbols, $\Delta$ is a unary function symbol and $\leq, <$ are binary relation symbols. Let $Var$ be a set of variables.%Let $J_i\subseteq [0,1]$ be an interval with endpoints $j^{-}_i\leq j^{+}_i$. 
\begin{enumerate}
\item Let $x$ be an $L_1$-term. A \textit{disjunct formula} $\eta_{J_i}(x)$ is a formula saying $x\not\in J_i$ and defined as\\($x${ }$\Diamond^{-}_{J_i}${ }$c^{-}_i)\vee (c^{+}_i${ }$\Diamond^{+}_{J_i}${ } $x$), where
\begin{enumerate}
\item $\Diamond^{-}_{J_i}$ is $<$ and $\Diamond^{+}_{J_i}$ is $<$ if ${J_i}=[j_i^{-}, j_i^{+}]$,
\item $\Diamond^{-}_{J_i}$ is $\leq$ and $\Diamond^{+}_{J_i}$ is $<$ if ${J_i}=(j_i^{-}, j_i^{+}]$,
\item $\Diamond^{-}_{J_i}$ is $<$ and $\Diamond^{+}_{J_i}$ is $\leq$ if ${J_i}=[j_i^{-}, j_i^{+})$, 
\item $\Diamond^{-}_{J_i}$ is $\leq$ and $\Diamond^{+}_{J_i}$ is $\leq$ if ${J_i}=(j_i^{-}, j_i^{+})$.
\end{enumerate}
\item If $x, y$ are $L_1$-terms, then $x\leq y, x<y, x=y, \eta_{J_i}(x)$ are \textit{tableau formulas}. If $x,y$ are $L_0$-terms, then $x\leq y, x<y, x=y$ are $L_0$-\textit{formulas}.
\item An $L_0$-structure $\mathcal{M}$ is called \textit{standard} iff it is of the form $$(\mathbb{R},  + , -, \cdot, \div, \min, \max, \leq, <,  0, 1, \rho, (j_i^-, j^+_i) :i\in I),$$ where $+ , -, \cdot, \div$, $\min, \max, 0, 1, \leq, <$ are the usual functions with $x\div 0$ assigned to $0$ for any $x\in \mathbb{R}$, $\rho:Par\rightarrow [0,1]$ is a function, and, $c_i^-,c_i^+$ are interpreted in $\mathcal{M}$ as $j_i^-, j_i^+$, respectively.
\item  Let $E$ be a set of tableau formulas $e$ of the form $s\leq t, s<t, s=t$, where $s,t$ are $L_0$-terms. We say that a mapping $$\sigma:Var\rightarrow [0,1]$$ is a \textit{solution} of $E$ iff there exists a standard $L_0$-structure $\mathcal{M}$ such that $$\mathcal{M}, \sigma\models e,$$ for all $e\in E$. We will call $\mathcal{M}$ an \textit{$L_0$-structure modelling} $E$. 
\end{enumerate}
\end{defn}

\noindent We will use $PROP$ to denote a set of propositional atoms.

\begin{defn}(Translation function)\label{tab_2}\\
Let $\mathcal{F}$ be the set of formulas of BL$_{\triangle\sim}$ and $\mathbb{T}$ be the set of $L_1$-terms.  Let $\mu: PROP\rightarrow Var$ (we will write $\mu(p)$ as $\mu_p$) be a one-to-one mapping assigning variables to propositional atoms. Let $\psi, \varphi\in \mathcal{F}$. Then, we define a translation function $\tau:\mathcal{F}\rightarrow\mathbb{T}$, inductively:
\begin{enumerate}
\item $\tau(\bar{0})=0$, $\tau(\bar{1})=1$,
\item $\tau(p)=\mu_p$ for every $p\in PROP$,
\item $\tau(\psi\&\varphi)=\tau(\psi)\star\tau(\varphi)$,
\item $\tau(\psi\rightarrow\varphi)=\tau(\psi)\Rightarrow\tau(\varphi)$,
\item $\tau(\psi\vee\varphi)=\max\{\tau(\psi),\tau(\varphi)\}$,
\item $\tau(\psi\wedge\varphi)=\min\{\tau(\psi),\tau(\varphi)\}$,
\item $\tau(\triangle \psi)=\Delta \tau(\psi)$,
\item $\tau(\sim\psi)=1-\tau(\psi)$.
\end{enumerate}
\end{defn}

We will recall definitions of some terms of graph theory. A \textit{graph} is a structure $(N, E)$, where $N$ is a set of nodes, $E$ is a set of edges such that $E\subseteq N\times N$ such that $\lnot E(n,n)$ for all $n\in N$. A \textit{successor} of $n\in N$ is $n'\in N$ iff there is an edge $e$ such that $(n,n')=e$. A \textit{predecessor} of $n\in N$ is $n'\in N$ iff there is an edge $e'$ such that $(n',n)=e'$. A \textit{path} from $n\in N$ is a sequence of nodes $n_0=n, n_1, ..., n_k, ...$ such that $k\geq 0$ and $n_i$ is a predecessor of $n_{i+1}$ for $i=0, 1, ..., k, ...$. A \textit{leaf} is a node with no successors, and a \textit{root} is a node with no predecessors. A \textit{branch} is either a path from a root to a leaf if the latter exists, or otherwise an infinite path from a root. We will call a \textit{tree} an acyclic connected graph $(N, E)$, in which there is exactly one root and if a node is not a root, then it has exactly one predecessor. The depth of a node $n\in N$ within a branch $\mathcal{B}$, denoted by $d(n, \mathcal{B})$, is the number of nodes on the path from the root to $n$. The height of a node $n\in N$ within a branch $\mathcal{B}$, denoted by $h(n, \mathcal{B})$, is the number of nodes on the path from $n$ to a leaf of branch $\mathcal{B}$ if $\mathcal{B}$ is finite.\\

The definition below is a much extended version of the tableau calculus in \cite{Kulacka2014}, it incorporates additional connectives, a set of formulas (as oposed to one formula as in \cite{Kulacka2014}) and the extended notion of satisfiability. The exposition of branch expansion rules is more compact than in \cite{Kulacka2014} with new rules for splitting and for the additional connectives. This compactness of the rules enables us for a finite set of formulas to have finite branches with possibly infinite nodes as the latter depends on set $\mathcal{K}$, and this was not necessary for the tableau calculus in \cite{Kulacka2014}. Finiteness of the branches of a $\mathcal{K}$-tableau is essential for the proof of theorem \ref{sound_2}.

\begin{defn}\label{tab_3}
Let $\Psi$ be a set of BL$_{\triangle\sim}$ formulas. A \textit{$\mathcal{K}$-tableau} $\mathcal{T}$ for $\Psi$  is a tree whose nodes are sets of tableau formulas and whose root is $$\{\eta_{J_i}(\tau(\psi)):i\in I, \psi\in\Psi\},$$ and on which the branch expansion rules\footnote{We use branch expansion rules to generate nodes in a branch.}  have been fully applied. Let $\Gamma$ be a set of tableau formulas.

First, we will well-order $\Psi$, and then we will be selecting $\psi\in \Psi$ one by one and applying Split Rule to all $\eta_{J_i}(\tau(\psi)), i\in I$ simultaneously in all current nodes. Note that there will be $2^{|I|}$ nodes generated for each $\psi\in \Psi$ by application of Split Rule. \\
\newpage

\noindent \textbf{Split Rule}. Let $S\subseteq I$. For each $S$, there is a successor of  $\Gamma\cup\{\eta_{J_i}(\tau(\psi)):i\in I\}$ given by:
\begin{description}
\item[$S_\psi$.] $\Gamma \cup  \{\tau(\psi)${ }$\Diamond_{J_i}^{-}${ }$c_i^{-}:i\in S\} \cup \{c_i^{+}${ }$\Diamond_{J_i}^{+}${ }$\tau(\psi): i\in I-S\}$
\end{description}
where $\Diamond_{J_i}^{-}, \Diamond_{J_i}^{+}$ are as defined in Definition \ref{tab_1}.\\

Next, when there is no more $\eta_{J_i}(\tau(\psi)), i\in I, \psi\in \Psi$ in the current nodes, we will apply the other branch expansion rules. \\

\noindent The multiple inequality should be understood in the usual way, e.g. instead of writing $a\leq c, c= d, d<f$, we write $a\leq c= d<f$. Let $x,y$ be $L_1$-terms.   Let $K, K_0,..., K_{n-1}\in\{L, P\}$ be the labels as shown in the branch expansion rules. Suppose that parameters $a_0^{K_0}< b_0^{K_0}\leq a_1^{K_1}<...\leq a_{n-1}^{K_{n-1}}< b_{n-1}^{K_{n-1}}$ ($n\geq 1$) have been selected in the previous steps. We will use the following sets $\mathcal{I}^K$ ($\mathcal{J}$ respectively) in the subrules \textbf{\L, P} (\textbf{min}, respectively) of the branch expansion rules $\star, \Rightarrow$, for which we have chosen the active term (one undergoing substitution) of the form $x\star y, x\Rightarrow y$.\\

\noindent In the following case, $a^K, b^K\in Par$ are new distinct parameters. Then 
\begin{itemize}
\item Case 1. $\mathcal{I}^K=\{0\leq a^K< b^K\leq a_0^{K_0}\}$,
\item Case 2. $\mathcal{I}^K=\{a^K=a_i^{K_i}< b^K=b_i^{K_i}\}$ for some $0\leq i\leq n-1$ such that $K=K_i$,
\item Case 3. $\mathcal{I}^K=\{b_i^{K_i}\leq a^K< b^K\leq a_{i+1}^{K_{i+1}}\}$ for some $0\leq i\leq n-2$,
\item Case 4. $\mathcal{I}^K=\{b_{n-1}^{K_{n-1}}\leq a^K< b^K\leq 1\}$.
\item Case 5. $\mathcal{J}=\{0\leq x\leq a_0^{K_0}\}$,
\item Case 6. $\mathcal{J}=\{a_i^{K_i}\leq x\leq b_i^{K_i}, y\leq a_i^{K_i}\}$ for some $0\leq i\leq n-1$,
\item Case 7. $\mathcal{J}=\{a_i^{K_i}\leq x\leq b_i^{K_i}, b_i^{K_i}\leq y\}$ for some $0\leq i\leq n-1$,
\item Case 8. $\mathcal{J}=\{b_i^{K_i}\leq x\leq a_{i+1}^{K_{i+1}}\}$ for some $0\leq i\leq n-2$,
\item Case 9. $\mathcal{J}=\{b_{n-1}^{K_{n-1}}\leq x \leq 1\}$.
\end{itemize}
If no parameters have been selected in the previous steps, then $\mathcal{I}^K=\{0\leq a^K< b^K\leq 1\}$ (Case 10.)  and $\mathcal{J}=\emptyset$ (Case 11.).\\

\noindent Note that cases 1-4 and 10 express the conditions when $x,y$ belong to the same \L ukasiewicz or Product component, while the remaining cases when they do not. Cases 5, 8, 9, 11 take care of the situation when $x$ does not belong to any of the components (being constructed by introduction of the parameters by branch expansion rules), while cases 6 and 7 are for situations, in which $x$ is in one of the components, and $y$ is not in this component. \\

\noindent We will use a notation $\gamma[v/t]$ to denote the result of substituting the term $v$ for each occurrence of the term $t$ (if any) in the formula $\gamma$.\\

\noindent If a node consists wholly of $L_0$-formulas, it is a leaf and no rules are applied to it.  Otherwise, we choose an active term $t$ of the form $x\star y, x\Rightarrow y$, or $\Delta x$ that occurs in at least one formula in the node, and apply the rule below according to the form of the active term. \\

\noindent \textbf{Rule $(\star)$}. A branch with a node $\Gamma$ expands following the subrules:
\begin{description}
\item[\L .] $\mathcal{I}^L \cup\{a^L\leq x\leq b^L, a^L\leq y\leq b^L\}\cup\{\gamma[\max\{a^L, x+y-b^L\}/x\star y]:\gamma \in \Gamma$\}
\item[P.] $\mathcal{I}^P \cup \{a^P\leq x\leq b^P, a^P\leq y\leq b^P\}\cup\{\gamma[a^P+\frac{(x-a^P)(y-a^P)}{b^P-a^P}/x\star y]: \gamma\in\Gamma$\}
\item[min.] $\mathcal{J} \cup \{\gamma[\min\{x,y\}/x\star y]: \gamma\in\Gamma\}$
\end{description}

\noindent \textbf{Rule $(\Rightarrow)$}. A branch with a node $\Gamma$ expands following the subrules:
\begin{description}
\item[All.] $\{x\leq y\}\cup\{\gamma[1/x\Rightarrow y]: \gamma\in\Gamma$\}
\item[\L .] $\mathcal{I}^L \cup\{a^L\leq y<x\leq b^L\}\cup\{\gamma[b^L-x+y/x\Rightarrow y]: \gamma\in\Gamma$\}
\item[P.] $\mathcal{I}^P \cup \{a^P\leq y< x\leq b^P\}\cup\{\gamma[a^P+\frac{(y-a^P)(b^P-a^P)}{x-a^P}/x\Rightarrow y]: \gamma\in\Gamma$\}
\item[min.] $\mathcal{J} \cup \{ y<x\}\cup\{[y/x\Rightarrow y]: \gamma\in\Gamma \}$
\end{description}
\newpage
\noindent\textbf{Rule $(\Delta)$}. A branch with a node $\Gamma$ expands following the subrules:
\begin{description}
\item[$\Delta$1.] $ \{1\leq x\}\cup\{\gamma[1/\triangle x]: \gamma\in\Gamma$\}
\item[$\Delta$2.] $\{x<1\}\cup\{\gamma[0/\triangle x]: \gamma\in\Gamma$\}
\end{description}

Note that the actual number of new nodes generated by rules $\star$ and $\Rightarrow$ will depend on how many parameters are on the current node as these influence the number of different $\mathcal{I}^L, \mathcal{I}^P, \mathcal{J}$. That is it dependes on how many cases of the subrules we can apply. For example, if there are four parameters on the current node $0\leq a_0^L<b_0^L\leq a_1^L<b_1^L\leq 1$, subrules \textbf{\L} and \textbf{P} will have 5 (Cases 1, 2, 2, 3, 4) and 3 (Cases 1, 3, 4) cases, respectively, and a subrule \textbf{min} will have 7 cases (Cases 5, 6, 6, 7, 7, 8, 9). So we get a total of 15 successors for the current node generated by \textbf{Rule} $\star$ and 16 successors generated by \textbf{Rule} $\Rightarrow$. 

This concludes Definition \ref{tab_3}.
\end{defn}

\begin{defn}\label{tab_4}
For each  branch $\mathcal{B}$ of a $\mathcal{K}$-tableau $\mathcal{T}$ and each node $n\in\mathcal{B}$, we consider the set of $L_0$-formulas in $n$, $n\restriction_{L_0}$. We say that $\mathcal{B}$ is \textit{closed} if for some node $n\in\mathcal{B}$, $n\restriction_{L_0}$ has no solution, otherwise it is \textit{open}. A $\mathcal{K}$-tableau $\mathcal{T}$ is \textit{closed} if it only contains closed branches. \footnote{Whether a tableau is closed is decidable by Tarski theorem \cite{Tarski} on decidability of the first-order theory of $(\mathbb{R},+,\cdot)$.} A $\mathcal{K}$-tableau $\mathcal{T}$ is \textit{open} if it has an open branch.
\end{defn}

\section{Example}
Now we will show an example to demonstrate the applicability of the rules.  We will not show a whole $\mathcal{K}$-tableau for $\Psi$, just an open branch.

\begin{expl}\label{example1} Let $\mathcal{K}=[\frac{1}{2},\frac{3}{4}]\cup\{1\}$ and $\Psi=\{ \bar{1}\rightarrow p\& r, \triangle r\rightarrow (p\vee q)\}$. Let us build a $\mathcal{K}$-tableau for $\Psi$. $J_1=[0,\frac{1}{2}), J_2=(\frac{3}{4},1)$. Note that nodes (11), (12), (13), (14) are generated simultaneously. Let $\psi_1=\bar{1}\rightarrow p\& r$ and $\psi_2=\triangle r\rightarrow (p\vee q)$.
\begin{description}
\item[(1)] $\{1\Rightarrow(\mu_p\star\mu_r)< 0 \vee \frac{1}{2}\leq 1\Rightarrow(\mu_p\star\mu_r),$\\
$ 1\Rightarrow(\mu_p\star\mu_r)\leq \frac{3}{4} \vee 1\leq1\Rightarrow(\mu_p\star\mu_r), $\\
$\Delta \mu_r\Rightarrow\max\{\mu_p,\mu_q\}< 0 \vee \frac{1}{2}\leq \Delta \mu_r\Rightarrow\max\{\mu_p,\mu_q\}, $\\
$\Delta \mu_r\Rightarrow\max\{\mu_p,\mu_q\}\leq \frac{3}{4} \vee 1\leq\Delta \mu_r\Rightarrow\max\{\mu_p,\mu_q\}
 \}$
\item[(11)] Split Rule. $S_{\psi_1}=\{1,2\}$\\
$\{1\Rightarrow(\mu_p\star\mu_r)< 0, 1\Rightarrow(\mu_p\star\mu_r)\leq \frac{3}{4},$\\ $\Delta \mu_r\Rightarrow\max\{\mu_p,\mu_q\}< 0 \vee \frac{1}{2}\leq \Delta \mu_r\Rightarrow\max\{\mu_p,\mu_q\}, $\\
$\Delta \mu_r\Rightarrow\max\{\mu_p,\mu_q\}\leq \frac{3}{4} \vee 1\leq\Delta \mu_r\Rightarrow\max\{\mu_p,\mu_q\}
 \}$
\item[(12)] Split Rule. $S_{\psi_1}=\{2\}$\\
$\{\frac{1}{2}\leq 1\Rightarrow(\mu_p\star\mu_r), 1\Rightarrow(\mu_p\star\mu_r)\leq \frac{3}{4},$\\
$ \Delta \mu_r\Rightarrow\max\{\mu_p,\mu_q\}< 0 \vee \frac{1}{2}\leq \Delta \mu_r\Rightarrow\max\{\mu_p,\mu_q\}, $\\
$\Delta \mu_r\Rightarrow\max\{\mu_p,\mu_q\}\leq \frac{3}{4} \vee 1\leq\Delta \mu_r\Rightarrow\max\{\mu_p,\mu_q\}
 \}$
\item[(13)] Split Rule. $S_{\psi_1}=\{1\}$\\
$\{1\Rightarrow(\mu_p\star\mu_r)< 0, 1\leq1\Rightarrow(\mu_p\star\mu_r),$\\
$ \Delta \mu_r\Rightarrow\max\{\mu_p,\mu_q\}< 0 \vee \frac{1}{2}\leq \Delta \mu_r\Rightarrow\max\{\mu_p,\mu_q\}, $\\
$\Delta \mu_r\Rightarrow\max\{\mu_p,\mu_q\}\leq \frac{3}{4} \vee 1\leq\Delta \mu_r\Rightarrow\max\{\mu_p,\mu_q\}
 \}$
\item[(14)] Split Rule. $S_{\psi_1}=\emptyset$.\\
$\{\frac{1}{2}\leq1\Rightarrow(\mu_p\star\mu_r), 1\leq1\Rightarrow(\mu_p\star\mu_r),$\\ $\Delta \mu_r\Rightarrow\max\{\mu_p,\mu_q\}< 0 \vee \frac{1}{2}\leq \Delta \mu_r\Rightarrow\max\{\mu_p,\mu_q\}, $\\
$\Delta \mu_r\Rightarrow\max\{\mu_p,\mu_q\}\leq \frac{3}{4} \vee 1\leq\Delta \mu_r\Rightarrow\max\{\mu_p,\mu_q\}
 \}$
\item[(124)]  Split Rule. $S_{\psi_2}=\emptyset$.\\
$\{\frac{1}{2}\leq 1\Rightarrow(\mu_p\star\mu_r), 1\Rightarrow(\mu_p\star\mu_r)\leq \frac{3}{4},$\\
$\frac{1}{2}\leq \Delta \mu_r\Rightarrow\max\{\mu_p,\mu_q\}, 1\leq \Delta \mu_r\Rightarrow\max\{\mu_p,\mu_q\}
 \}$
\item[(1241)] Rule ($\Rightarrow$) All.\\
$\{\frac{1}{2}\leq \Delta \mu_r\Rightarrow\max\{\mu_p,\mu_q\}, 1\leq \Delta \mu_r\Rightarrow\max\{\mu_p,\mu_q\},$\\
$1\leq \mu_p\star\mu_r, \frac{1}{2}\leq 1, 1\leq \frac{3}{4}\}$ \\ (closed)
\item[(1242)] Rule ($\Rightarrow$) \L . Case 10.\\
$\{\frac{1}{2}\leq \Delta \mu_r\Rightarrow\max\{\mu_p,\mu_q\}, 1\leq \Delta \mu_r\Rightarrow\max\{\mu_p,\mu_q\},$\\
$0\leq a_0^L<b_0^L\leq 1, a_0^L\leq \mu_p\star\mu_r<1\leq b_0^L, \frac{1}{2}\leq b_0^L-1+\mu_p\star\mu_r, b_0^L-1+\mu_p\star\mu_r\leq \frac{3}{4}\}$ 
\item[(12421)] Rule ($\star$) \L . Case 2.\\
$\{\frac{1}{2}\leq \Delta \mu_r\Rightarrow\max\{\mu_p,\mu_q\}, 1\leq \Delta \mu_r\Rightarrow\max\{\mu_p,\mu_q\}, 0\leq a_0^L<b_0^L\leq 1,$\\
$a_0^L= a_1^L<b_1^L=b_0^L, a_1^L\leq \mu_p\leq b_1^L, a_1^L\leq \mu_r\leq b_1^L,$\\ $a_0^L\leq \max\{a_1^L, \mu_p+\mu_r-b_1^L\}<1\leq b_0^L, \frac{1}{2}\leq b_0^L-1+ \max\{a_1^L, \mu_p+\mu_r-b_1^L\},$\\
$ b_0^L-1+ \max\{a_1^L, \mu_p+\mu_r-b_1^L\}\leq \frac{3}{4}\}$ 
\item[(124211)] Rule ($\Delta$) $\Delta 1$.\\
$\{ 0\leq a_0^L<b_0^L\leq 1,a_0^L= a_1^L<b_1^L=b_0^L, a_1^L\leq \mu_p\leq b_1^L, a_1^L\leq \mu_r\leq b_1^L,$\\
$a_0^L\leq \max\{a_1^L, \mu_p+\mu_r-b_1^L\}<1\leq b_0^L, \frac{1}{2}\leq b_0^L-1+ \max\{a_1^L, \mu_p+\mu_r-b_1^L\},$\\
$ b_0^L-1+ \max\{a_1^L, \mu_p+\mu_r-b_1^L\}\leq \frac{3}{4},$\\
$ 1\leq \mu_r, \frac{1}{2}\leq 1\Rightarrow\max\{\mu_p,\mu_q\}, 1\leq 1\Rightarrow\max\{\mu_p,\mu_q\}\}$

\item[(1242111)] Rule ($\Rightarrow$) All.\\
$\{ 0\leq a_0^L<b_0^L\leq 1,a_0^L= a_1^L<b_1^L=b_0^L, a_1^L\leq \mu_p\leq b_1^L, a_1^L\leq \mu_r\leq b_1^L, $\\
$a_0^L\leq \max\{a_1^L, \mu_p+\mu_r-b_1^L\}<1\leq b_0^L, \frac{1}{2}\leq b_0^L-1+ \max\{a_1^L, \mu_p+\mu_r-b_1^L\},$\\
$ b_0^L-1+ \max\{a_1^L, \mu_p+\mu_r-b_1^L\}\leq \frac{3}{4}, 1\leq \mu_r,$\\$ 1\leq\max\{\mu_p,\mu_q\}, \frac{1}{2}\leq 1, 1\leq 1\}$

\end{description}
No more branch explansion rules can be applied to node (1242111). Let $$\{(\mu_p,\frac{1}{2}), (\mu_q,1), (\mu_r,1)\}\subseteq\sigma$$ and $$\mathcal{M}=\biggl(\mathbb{R}, + , -, \cdot, \div, \leq, <, 0,1, \rho, (0, \frac{1}{2}), (\frac{3}{4}, 1)\biggr),$$ where $\{(a_0^L,0),$ $(b_0^L,1), (a_1^L,0)$, 
$(b_1^L,1)\}\subseteq \rho$. Then $$\mathcal{M}, \sigma\models e,$$ for all $e\in \{ 0\leq a_0^L<b_0^L\leq 1,a_0^L= a_1^L<b_1^L=b_0^L, a_1^L\leq \mu_p\leq b_1^L, a_1^L\leq \mu_r\leq b_1^L, a_0^L\leq \max\{a_1^L, \mu_p+\mu_r-b_1^L\}<1\leq b_0^L, \frac{1}{2}\leq b_0^L-1+ \max\{a_1^L, \mu_p+\mu_r-b_1^L\}, b_0^L-1+ \max\{a_1^L, \mu_p+\mu_r-b_1^L\}\leq \frac{3}{4}, 1\leq \mu_r, 1\leq\max\{\mu_p,\mu_q\}, \frac{1}{2}\leq 1, 1\leq 1\}$. Therefore the branch with node (1242111) is open, so the $\mathcal{K}$-tableau for $\Psi$ is open. Note that this is not the only possible solution and it is not the only open branch.
\end{expl}
It is worth noticing that branches (11), (13) will close in any model.

\section{Finite $\mathcal{K}$-satisfiability}
In this section we will show that every finite set $\Psi$ of formulas of BL$_{\triangle\sim}$ is  $\mathcal{K}$-satisfiable for an arbitrary $\mathcal{K}\subseteq [0,1]$ iff we can construct an open $\mathcal{K}$-tableau for $\Psi$. Since $\Psi$ is finite, the $\mathcal{K}$-tableaux we will work with in this section all have finite branches.

\begin{defn}
Let $\Psi$ be a  set of formulas of BL$_{\triangle\sim}$ and $\mathcal{K}\subseteq [0,1]$. We say that $\Psi$ is \textit{$\mathcal{K}$-satisfiable} if there exists $V:PROP\rightarrow [0,1]$ and a continuous t-norm $\star$ such that $V_\star(\psi)\in \mathcal{K}$ for all $\psi\in \Psi$. Formula $\psi$ is $\mathcal{K}$-satisfiable if the singleton $\{\psi\}$ is $\mathcal{K}$-satisfiable.
\end{defn}

\begin{expl}
Let $\mathcal{K}=[\frac{1}{2},\frac{3}{4}]\cup\{1\}$. The set $\Psi=\{ \bar{1}\rightarrow p\& r, \triangle r\rightarrow (p\vee q)\}$ is $\mathcal{K}$-satisfiable. Take $V(p)=\frac{1}{2}, V(q)=1, V(r)=1$ and \L ukasiewicz t-norm as $\star$. Then $V_\star(\bar{1}\rightarrow p\& r)=1\Rightarrow (V(p)\star V(r))=1\Rightarrow (V(p)+ V(r)-1)=1\Rightarrow (\frac{1}{2}+1-1)=1\Rightarrow \frac{1}{2}=1-1+\frac{1}{2}=\frac{1}{2}$, $V_\star(\triangle r\rightarrow (p\vee q))=\Delta V(r)\Rightarrow\max\{V(p),V(q)\}=1\Rightarrow\max\{\frac{1}{2}, 1\}=1\Rightarrow 1=1$, and thus $V_\star\psi)\in \mathcal{K}$ for all $\psi\in \Psi$. We use the model constructed in example \ref{example1} to find values $V(p), V(q), V(r)$ and an ordinal sum, which consists of \L ukasiewicz component [0,1].
\end{expl}

\begin{defn}\label{sound_1}
 Let $\{a_k^{K_k}, b_k^{K_k}:k\in C\}$, where $0\leq a_k^{K_k}< b_k^{K_k}\leq 1$ and $K_k\in\{L, P\}$, be the parameters introduced by branch extension rules in $\mathcal{B}$. Let $l_\mathcal{B}$ be the leaf of the branch. Suppose that $\mathcal{M}$ is a standard $L_0$-structure and $\sigma: Var\rightarrow [0,1]$ such that 
\begin{equation}\notag
\mathcal{M}, \sigma \models s,
\end{equation}
for all $s\in l_\mathcal{B}$.
\begin{enumerate} 
\item We expand the model $\mathcal{M}$ to a standard $L_1$-structure $\mathcal{M}_\mathcal{B}=
(\mathbb{R},  + , -, \cdot, \div, \min, \max, \leq, <,$ $0, 1, \rho, (j_i^-, j_i^+): i\in I, {\star_\mathcal{B}}, {\Rightarrow}_\mathcal{B}, \Delta)$, where $j_i^-, j_i^+$ are endpoints of interval $J_i$, such that for every $v,w\in[0,1]$,\\
\footnotesize
$v\star_\mathcal{B} w=
\begin{cases}
\max\{\rho(a_k^L), v+w-\rho(b_k^L)\}&\text{if } v, w\in [\rho(a_k^L),\rho(b_k^L)],\\
\rho(a_k^P)+\frac{(v-\rho(a_k^P))\cdot(w-\rho(a_k^P))}{\rho(b_k^P)-\rho(a_k^P)}&\text{if } v,w\in [\rho(a_k^P),\rho(b_k^P)], \\
\min(v,w)&\text{otherwise,}\\
\end{cases}$\\

$v\Rightarrow_\mathcal{B} w=
\begin{cases}
1&\text{if }v\leq w,\\
\rho(b_k^L)-v+w&\text{if } \rho(a_k^L)\leq w<v \leq\rho(b_k^L), 0\leq k\leq n-1,\\
\rho(a_k^P)+\frac{(w-\rho(a_k^P))\cdot(\rho(b_k^P)-\rho(a_k^P))}{v-\rho(a_k^P)}&\text{if } \rho(a_k^P)\leq w<v \leq\rho(b_k^P), 0\leq k\leq n-1,\\
w&\text{otherwise.}\\
\end{cases}$\\

$\Delta v=
\begin{cases}
1&\text{if }1\leq v,\\
0&\text{if }v<1.\\
\end{cases}$\\

\normalsize

\item $\mathcal{M}_\mathcal{B}, \sigma\models z_1\vee z_2$ iff $\mathcal{M}_\mathcal{B}, \sigma\models z_1$ or $\mathcal{M}_\mathcal{B}, \sigma\models z_2$.
\item A subset  $S$  of a node of branch $\mathcal{B}$ is $\mathcal{B}$-\textit{satisfiable via $\mathcal{M}, \sigma$} iff for all $s\in S$ $$\mathcal{M}_\mathcal{B}, \sigma\models s,$$ 
where $L_1$-structure $\mathcal{M}_\mathcal{B}$ is constructed from $\mathcal{M}$ as in 1.
%\item Now, we say that branch $\mathcal{B}$ is \textit{satisfiable} iff there exist $\mathcal{M}, \sigma$ such that all nodes of $\mathcal{B}$ are $\mathcal{B}$-satisfiable via $\mathcal{M}, \sigma$.
\end{enumerate}
\end{defn}

\begin{thm}\label{sound_2}
Let $\mathcal{K}\subseteq [0,1]$. Let $\Psi$ be a finite set of BL$_{\triangle\sim}$ formulas and $\mathcal{T}$ be a $\mathcal{K}$-tableau, whose root is $\{\eta_i(\tau(\psi)):i\in I, \psi\in\Psi\}$ constructed as in Definition \ref{tab_3}. Then the following are equivalent:
\begin{enumerate}
\item $\mathcal{T}$ has an open branch.
\item $\Psi$ is $\mathcal{K}$-satisfiable.
\end{enumerate}
\end{thm}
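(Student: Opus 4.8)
The plan is to prove the two implications separately: $(1)\Rightarrow(2)$ is the soundness half and $(2)\Rightarrow(1)$ the completeness half. In both directions the argument runs along a single branch, which is finite because every branch expansion rule replaces the active term by one containing strictly fewer occurrences of $\star$, $\Rightarrow$ and $\Delta$ and because $\Psi$ is finite. The link between a node being ``satisfiable'' as a set of tableau formulas and $\Psi$ being $\mathcal{K}$-satisfiable is made through the construction, in Definition \ref{sound_1}, of the $L_1$-structure $\mathcal{M}_\mathcal{B}$ from a standard $L_0$-structure, together with the decomposition theorem (Theorem \ref{thm1}) and the residuum formula~(\ref{eq4}).

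\textbf{From (1) to (2).} Given an open branch $\mathcal{B}$, its leaf $l_\mathcal{B}$ consists of $L_0$-formulas only (the branch is finite and fully expanded), so openness supplies a solution $\sigma$ and a standard $L_0$-structure $\mathcal{M}$ with $\mathcal{M},\sigma\models s$ for all $s\in l_\mathcal{B}$. I would expand $\mathcal{M}$ to $\mathcal{M}_\mathcal{B}$ as in Definition \ref{sound_1}; since the parameter-ordering formulas $0\leq a_0^{K_0}<b_0^{K_0}\leq a_1^{K_1}<\dots\leq 1$ generated by the cases of Definition \ref{tab_3} all lie in $l_\mathcal{B}$, the intervals $[\rho(a_k^{K_k}),\rho(b_k^{K_k})]$ have pairwise disjoint interiors, so by Theorem \ref{thm1} the resulting $\star_\mathcal{B}$ is a continuous t-norm, $\Rightarrow_\mathcal{B}$ is its residuum via~(\ref{eq4}), and $\Delta$ is the Baaz operator. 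The heart of the argument is an induction on the height $h(n,\mathcal{B})$ showing that every node $n$ of $\mathcal{B}$ is $\mathcal{B}$-satisfiable via $\mathcal{M},\sigma$ (Definition \ref{sound_1}): the leaf is the base case; for the inductive step, a Split successor $n'$ retains, for each $i$, one true disjunct of $\eta_{J_i}(\tau(\psi))$, so the disjunction holds in $n$, while a $(\star)$, $(\Rightarrow)$ or $(\Delta)$ successor $n'$ adds $L_0$-constraints (an instance of $\mathcal{I}^L$, $\mathcal{I}^P$, $\mathcal{J}$, or one of $x\leq y$, $y<x$, $1\leq x$, $x<1$) that force the values of the subterms appearing in the active term into the component on which the defining clause of $\star_\mathcal{B}$, $\Rightarrow_\mathcal{B}$ or $\Delta$ evaluates the active term to exactly the term put in its place, so each $\gamma\in\Gamma$ holds in $n$ iff its rewrite holds in $n'$. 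At the root this gives $\mathcal{M}_\mathcal{B},\sigma\models\eta_{J_i}(\tau(\psi))$ for all $i\in I$, $\psi\in\Psi$, so the value of $\tau(\psi)$ under $\mathcal{M}_\mathcal{B},\sigma$ lies in $[0,1]$ (as $\star_\mathcal{B},\Rightarrow_\mathcal{B},\Delta$ preserve $[0,1]$) and avoids every $J_i$, hence belongs to $[0,1]\setminus\bigcup_{i\in I}J_i=\mathcal{K}$. Taking $V(p)=\sigma(\mu_p)$ and verifying by induction on formulas that $V_{\star_\mathcal{B}}(\psi)$ equals this value completes this direction.

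\textbf{From (2) to (1).} Suppose $\star$ is a continuous t-norm and $V:PROP\to[0,1]$ with $V_\star(\psi)\in\mathcal{K}$ for all $\psi\in\Psi$; by Theorem \ref{thm1} write $\star$ as an ordinal sum over intervals with pairwise disjoint interiors. I would trace out an open branch of $\mathcal{T}$ from the root downwards, carrying along $\sigma$ with $\sigma(\mu_p)=V(p)$ and an interpretation $\rho$ of the parameters met so far, and maintaining the invariant that the current node is satisfied by the structure built so far, in which $\star$, its residuum and the Baaz function interpret $\star$, $\Rightarrow$ and $\Delta$. The root obeys the invariant because $V_\star(\psi)\in\mathcal{K}$ forces $V_\star(\psi)\notin J_i$, i.e.\ $\eta_{J_i}(\tau(\psi))$ holds. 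At a Split step take $S=\{i:\tau(\psi)\,\Diamond^{-}_{J_i}\,c_i^{-}\text{ holds}\}$ (for each $i$ at least one disjunct of $\eta_{J_i}$ holds, the value being outside $J_i$), so the successor $S_\psi$ still holds. At a $(\star)$ or $(\Rightarrow)$ step on an active term $x\star y$ or $x\Rightarrow y$, read off from the ordinal sum whether the current values of $x$ and $y$ lie in a common \L ukasiewicz component, in a common Product component, or in neither (resp.\ whether $x\leq y$ holds), then pick the matching subrule \textbf{\L}, \textbf{P}, \textbf{min} or \textbf{All} and the case among 1--11 that locates that component (or the value of $x$) relative to the already-chosen parameters, extending $\rho$ so that $a^K,b^K$ name the endpoints of that component (reusing existing parameters in Case~2). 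At a $(\Delta)$ step choose $\Delta 1$ if the value of $x$ is $1$ and $\Delta 2$ otherwise. Formulas~(\ref{eq3}) and~(\ref{eq4}) guarantee the chosen successor still obeys the invariant and that the parameters stay ordered. As the branch is finite it ends at a leaf all of whose $L_0$-formulas are satisfied by the $L_0$-reduct of the structure built, which is a standard $L_0$-structure; hence $\sigma$ is a solution of the leaf and the branch is open.

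\textbf{Where the work is.} The substantive bookkeeping is concentrated in Rules $(\star)$ and $(\Rightarrow)$: for $(1)\Rightarrow(2)$, checking that each subcase reflects satisfiability; for $(2)\Rightarrow(1)$, checking that the component of $\star$ holding the values of $x$ and $y$ always matches one of the Cases 1--11 and can be slotted into the ordered list $a_0^{K_0}<b_0^{K_0}\leq a_1^{K_1}<\dots$ without spoiling the order. The genuinely delicate point is adjacent components of $\star$ that share an endpoint, where ``the component containing $x$'' is not uniquely determined; but both readings assign the active term the same value, so either choice works. The remaining ingredients --- finiteness of the branches, persistence of $L_0$-formulas down a branch, and the identification of $V_{\star_\mathcal{B}}(\psi)$ with the value of $\tau(\psi)$ under $\mathcal{M}_\mathcal{B},\sigma$ --- are routine.
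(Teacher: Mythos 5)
Your proposal is correct and follows essentially the same route as the paper: finiteness of branches from finiteness of $\Psi$, the forward direction by expanding the $L_0$-solution of the leaf to $\mathcal{M}_\mathcal{B}$ and inducting on node height to show $\mathcal{B}$-satisfiability, and the converse by tracing a branch guided by the ordinal-sum decomposition of $\star$, assigning parameters to component endpoints, and inducting on depth (your remark about adjacent components sharing an endpoint is a sensible extra observation the paper leaves implicit).
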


\begin{proof}
There are finitely many formulas in $\Psi$, therefore each branch of $\mathcal{T}$ is finite.\footnote{Though they may be infinitely many branches with infinite nodes.} Suppose that branch $\mathcal{B}$ of tableau $\mathcal{T}$ is open.  Since it is finite, a leaf $l_\mathcal{B}$ of branch $\mathcal{B}$ exists. Also, for every node $n$ of branch $\mathcal{B}$, $n\restriction_{L_0}\subseteq l_\mathcal{B}$. Since $\mathcal{B}$ is open, there is a standard $L_0$-structure $\mathcal{M}$ modelling the leaf $l_\mathcal{B}$ of $\mathcal{B}$ and an assignment $\sigma:Var\rightarrow[0,1]$ such that $\mathcal{M},\sigma\models l_\mathcal{B}$. We will  construct model $\mathcal{A}=([0,1],\star,\Rightarrow,0,1,V)$ such that $V_\star(\psi)\in \mathcal{K}$ for every $\psi\in\Psi$. First, we put $V(p)=\sigma(\tau(p))$ for all $p\in PROP$.  We define operation $\star:[0,1]^2\rightarrow [0,1]$ as $\star_\mathcal{B}$ and operation $\Rightarrow:[0,1]^2\rightarrow [0,1]$ as $\Rightarrow_\mathcal{B}$ (see definition \ref{sound_1}). By theorem \ref{thm1} and its  corollary expressed as formula (\ref{eq4}), $\star$ is a continuous t-norm with residuum $\Rightarrow$ and also by definitions \ref{tab_2} and \ref{sound_1}, $V_\star(\psi)=\llbracket\tau(\psi)\rrbracket^{\mathcal{M}_\mathcal{B},\sigma}$, where $\llbracket z\rrbracket^{\mathcal{M}_\mathcal{B},\sigma}$ is the value of $L_1$-term $z$ in $\mathcal{M}_\mathcal{B}$ under the assignment $\sigma$. By induction on $h(m_\mathcal{B}, \mathcal{B})$, we show the claim that every node $m_\mathcal{B}$ of $\mathcal{B}$ is $\mathcal{B}$-satisfiable via $\mathcal{M},\sigma$. The sketch of the proof is as follows. Let $m'_\mathcal{B}$ be the node of $\mathcal{B}$ such that $h(m'_\mathcal{B},\mathcal{B})=h(m_\mathcal{B},\mathcal{B})+1$,  where $m_\mathcal{B}$ is assumed to be $\mathcal{B}$-satisfiable via $\mathcal{M},\sigma$. Then $m'_\mathcal{B}$ is also $\mathcal{B}$-satisfiable via $\mathcal{M},\sigma$ by definition \ref{tab_3}.
Therefore, in particular by this claim, the root of $\mathcal{T}$ is $\mathcal{B}$-satisfiable via $\mathcal{M},\sigma$, and thus by definitions \ref{tab_2} and \ref{sound_1}, $\llbracket\tau(\psi)\rrbracket^{\mathcal{M}_\mathcal{B},\sigma}=V_\star(\psi)\in \mathcal{K}$ for every $\psi\in\Psi$. Therefore, $\Psi$ is $\mathcal{K}$-satisfiable.

Conversely, suppose that there is a model $\mathcal{A}=([0,1], {\star}, {\Rightarrow}, 0, 1, V)$, where $\star$ is a continuous t-norm and $\Rightarrow$ is its residuum, $V:PROP\rightarrow [0,1]$ such that $V_\star(\psi)\in\mathcal{K}$  for all $\psi\in\Psi$.  We know that for every node $n_\mathcal{B}$ of a branch $\mathcal{B}$, we have  $n_\mathcal{B}\restriction_{L_0}\subseteq l_\mathcal{B}$, where $l_\mathcal{B}$ is the leaf of the branch, which exists since all branches are finite. Therefore, a branch is open if its leaf $l_\mathcal{B}$ has a solution. That is, we need to find a standard $L_0$-structure modelling $l_\mathcal{B}$, say $\mathcal{M}$, and a mapping $\sigma:Var\rightarrow[0,1]$ such that $\mathcal{M}, \sigma\models e$ for all $e\in l_\mathcal{B}$. In (1) below, we construct the structure $\mathcal{M}$; that is we find the mapping $\rho:PAR\rightarrow[0,1]$. At the same time we choose nodes on a branch, say $\mathcal{B}$. Then, in (2) we show that branch $\mathcal{B}$ is open.

(1) Since we know the values $V_\star(\psi)$ for all $\psi\in\Psi$, we can select the nodes, which were generated by Split Rules applied to each $\eta_i(\tau(\psi)), i\in I$ for each $\psi\in\Psi$. Then we proceed in the following way. By theorem \ref{thm1}, $\star$ is defined as the ordinal sum of proto-t-norms $(\star_{\alpha_n^{K_n},\beta^{K_n}_n})_{n\in C}$, where $C$ is a countable index set. We will assign values of parameters occurring on $\mathcal{B}$ under $\rho$ to elements of $\{\alpha_n^{K_n},\beta^{K_n}_n:n\in C\}$. Suppose we selected the sequence of nodes $n_1,..., n_l$, where $l\geq 1$, $n_1$ is the root of $\mathcal{T}$ and $n_{i+1}$ is the successor of $n_i$ for all $1\leq i< l$. That is branch $\mathcal{B}$ is partially defined and $\rho$ is defined for all parameters on these nodes. By comparing $n_l$ and its successors, we deduce what the active term is and thus which of the branch expansion rules has been applied. Suppose it is $(\Rightarrow)$, the other cases are similar. Thus, there are subformulas of a formula in $\Psi$, $\psi_1, \psi_2$, and there is a set $\Gamma$ such that  $\tau(\psi_1)\Rightarrow\tau(\psi_2)$ is a subterm of all $\gamma\in\Gamma$, and $\Gamma$ is a subset of $n_l$, and none of $\gamma\in\Gamma$ belong to the successors of $n_l$. We know whether or not there is $i\in C$ such that $V_\star(\psi_1), V_\star(\psi_2)\in [\alpha_i^{K_i}, \beta_i^{K_i}]$, and whether or not $V_\star(\psi_1)\leq V_\star(\psi_2)$. If $V_\star(\psi_1)> V_\star(\psi_2)$ and if there is such $i$ and $K_i=L$ (or $K_i=P$), we select the subrule \L. of ($\Rightarrow$) (the subrule P. of ($\Rightarrow$), respectively), and depending on the relation of $\alpha_i^{K_i}, \beta_i^{K_i}$ to these $\alpha_j^{K_j}, \beta_j^{K_j}$ that are the values of parameters occurring on $n_l$, we select the node, say $n'_{l}$, resulting from Cases 1-4, and 10. Let $\rho$ assign to the parameters at $n'_{l}$ that do not occur in $n_{l}$, say $a,b$ with $a<b$, values $\rho(a)=\alpha_i^{K_i}, \rho(b)=\beta_i^{K_i}$. Suppose now there is not $i\in C$ such that $V_\star(\psi_1), V_\star(\psi_2)\in [\alpha_i^{K_i}, \beta_i^{K_i}]$ and $V_\star(\psi_1)>V_\star(\psi_2)$. We know the relations among $V_\star(\psi_1), V_\star(\psi_2)$, and the values of parameters occurring in $n_l$, thus we know which of Cases 5-9, and 11 match these values in the model $\mathcal{A}$. The remaining case is $V_\star(\psi_1)\leq V_\star(\psi_2)$, for which there is only one successor. Therefore, we can select the subsequent node.  We have now selected the next node in the path from the root. The procedure terminates at a leaf, where there are no $L_1$-terms occurring, at which point we selected all nodes in branch $\mathcal{B}$. We also partially defined the function $\rho$. To the parameters that have not received values under $\rho$ in this procedure, we assign arbitrary values from $[0,1]$. We have now constructed a standard $L_0$-structure $\mathcal{M}$.

(2)  To be able to show that $\mathcal{M}, \sigma\models e$ for all $e\in l_\mathcal{B}$, 
where $l_\mathcal{B}$ is the leaf of branch $\mathcal{B}$, it is sufficient to prove that for all nodes in $\mathcal{B}$, $n_\mathcal{B}$, $\mathcal{M}_\mathcal{B}, \sigma\models f$ for all $f\in n_\mathcal{B}$.  We will sketch the proof by induction on $d(n_\mathcal{B},\mathcal{B})$. We need to show that if $\mathcal{M}_\mathcal{B}, \sigma\models f$ for all $f\in n_\mathcal{B}$, then $\mathcal{M}_\mathcal{B}, \sigma\models f'$ for all $f'\in n'_\mathcal{B}$, where $d(n'_\mathcal{B}, \mathcal{B})=d(n_\mathcal{B}, \mathcal{B})+1$. By inspecting $n_\mathcal{B}, n'_\mathcal{B}$, we know which formulas are in $n_\mathcal{B}-n'_\mathcal{B}$. Suppose that in all formulas $\gamma\in n_\mathcal{B}-n'_\mathcal{B}$, $\tau(\psi_1)\Rightarrow\tau(\psi_2)$ is the active formula; the other cases are similar. By (1) above, we know which subrule of the branch expansion rule ($\Rightarrow$) and which of its cases are used to generate $n'_\mathcal{B}$. Thus, suppose that it was subrule \L{ }and Case 1 (again, the other cases are similar).  Let $a, b$ be the new parameters occurring at $n'_\mathcal{B}$. Thus, $\{0\leq a<b\leq a_0^L\}\subseteq n'_\mathcal{B}-n_\mathcal{B}$, where $a_0^L$ is a parameter occurring at $n_\mathcal{B}$. By (1) above, we know that $0\leq \rho(a)<\rho(b)\leq \rho(a_0^L)$. The other elements of $n'_\mathcal{B}-n_\mathcal{B}$ are (a) $a\leq \tau(\psi_2)<\tau(\psi_1)\leq b$, and (b) the elements that belong to  $\{\gamma[b-\tau(\psi_1)+\tau(\psi_2)/\tau(\psi_1)\Rightarrow\tau(\psi_2)]:\gamma\in n_\mathcal{B}-n'_\mathcal{B}\}$.

\noindent \textit{Claim}. Let $\theta$ be a subformula of a formula in $\Psi$. Then $\llbracket\tau(\theta)\rrbracket^{\mathcal{M}_\mathcal{B},\sigma}=V_\star(\theta)$.

\noindent \textit{Proof of the claim.} By induction on $\theta$. The base case for atomic $\theta$ is easy. Take subformulas of a formula in $\Psi$, $\theta, \varphi$ and assume the induction hypothesis for them. We show the result for $\theta\rightarrow\varphi$, i.e. $\llbracket\tau(\theta\rightarrow\varphi)\rrbracket^{\mathcal{M}_\mathcal{B},\sigma}=V_\star(\theta\rightarrow\varphi)$. The cases for $\theta\&\varphi, \theta\vee\varphi, \theta\wedge\varphi, \triangle\theta, \sim\theta$ are similar.

\noindent LHS=$\llbracket\tau(\theta\rightarrow\varphi)\rrbracket^{\mathcal{M}_\mathcal{B},\sigma}=\llbracket\tau(\theta)\Rightarrow\tau(\varphi)\rrbracket^{\mathcal{M}_\mathcal{B},\sigma}=\llbracket\tau(\theta)\rrbracket^{\mathcal{M}_\mathcal{B},\sigma}\Rightarrow_\mathcal{B}\llbracket\tau(\varphi)\rrbracket^{\mathcal{M}_\mathcal{B},\sigma}=V_\star(\theta)\Rightarrow_\mathcal{B} V_\star(\varphi)$, by inductive hypothesis.
 RHS=$V_\star(\theta\rightarrow\varphi)=V_\star(\theta)\Rightarrow V_\star(\varphi)$.
Now, by construction in (1), $\Rightarrow_\mathcal{B}, \Rightarrow$ agree on $V_\star(\theta), V_\star(\varphi)$ as long as $\theta, \varphi$ are subformulas of formulas in $\Psi$. Therefore, LHS=RHS. This completes the proof of the claim.

By (1) above and Claim, the inequalities in (a) and (b) are true in $\mathcal{M}_\mathcal{B}, \sigma$. Thus, $\mathcal{M}_\mathcal{B}, \sigma\models f'$ for all $f'\in n'_\mathcal{B}$. Therefore, we showed also that $\mathcal{M}, \sigma\models f'$ for all $f'\in n'_\mathcal{B}\restriction_{L_0}$. We have now proved that in particular $\mathcal{M}, \sigma\models e$ for all $e\in l_\mathcal{B}$. Thus, there is an $L_0$-structure modelling $l_\mathcal{B}$, $\mathcal{M}$, and a mapping $\sigma$ such that $\mathcal{M}, \sigma\models e$ for all $e\in l_\mathcal{B}$. Therefore $\mathcal{B}$ is open.
\qed
\end{proof}

\section{Conclusion and further research}
In this paper we constructed a $\mathcal{K}$-tableau calculus for formulas of BL$_{\triangle\sim}$ and proved that the existence of an open branch in the tableau is equivalent to $\mathcal{K}$-satisfiability of a finite set of formulas of BL$_{\triangle\sim}$. There are two additional notions of fuzzy satisfiability defined in \cite{Butnariu1995} and \cite{Navara2000}, which we will adapt to the context of BL$_{\triangle\sim}$ formulas.

\begin{defn}
Let $\Psi$ be a set of BL$_{\triangle\sim}$ formulas and $r\in[0,1]$. We say that $\Psi$ is \textit{weakly}, respectively \textit{strongly $r$-satisfiable} if $\Psi$ is $[r,1]$-satisfiable, respectively $\{r\}$-satisfiable. If there is  a maximal $r\in[0,1]$ for which $\Psi$ is weakly, respectively strongly satisfiable, then we call $r$ the \textit{weak}, respectively \textit{strong consistency degree} of $\Psi$. 
\end{defn}

Our result expressed in Definition \ref{tab_3} provides a method for constructing $[r,1]$-tableau, respectively $\{r\}$-tableau for a finite set of BL$_{\triangle\sim}$ formulas, which has finitely many branches. In this case, the tableau method due to Theorem \ref{sound_2} supplies an algorithm for constructing a continuous residuated t-norm and an evaluation for proposiional formulas if $[r,1]$-tableau, respectively $\{r\}$-tableau has an open branch, a problem that could be solved by an inequality solver. Given $r$, to check whether $\Psi$ is weakly $r$-satisfiable, one has to construct an adapted tableau with the following root $\{\sigma_r\leq \tau(\psi): \psi\in \Psi\}$, and a standard $L_0$-structure $\mathcal{M}$ modelling the leaf of a branch will be extended to $(\mathcal{M}, r)$, where $(\sigma_r)^{(\mathcal{M}, r)}=r$. Note that all the rules will be the same. Similarly, one can tackle strong $r$-satisfiability of finite $\Psi$. Having the above, one may envisage an implementation for the tableau methods such as described in \cite{Brys2012}.

\end{document}